\documentclass [12pt]{article}

\usepackage{placeins}
\usepackage{amsmath,amsthm,amscd,amssymb}
\usepackage[linesnumbered]{algorithm2e}
\usepackage{bm}
\usepackage[small, centerlast, it]{caption}
\usepackage{epsfig}
\usepackage[titles]{tocloft}
\usepackage[utf8]{inputenc} 
\usepackage[T1]{fontenc}
\usepackage{verbatim} 
\usepackage[active]{srcltx} 
\usepackage{fancyhdr}
\usepackage{caption}

\bmdefine{\bz}{z}
\bmdefine{\bTheta}{\Theta}


\oddsidemargin 0cm      
\evensidemargin 0cm     
\headsep 20pt            
\textheight 21.5cm        
\textwidth 16cm         
\def\sp{\hskip -5pt}


\def\b1{{1\!\!1}}

\def\sH{{\mathsf H}}

\def\bC{{\mathbb C}}           

\def\bI{{\mathbb I}}

\def\bR{{\mathbb R}}


\def\beq{\begin{eqnarray}}
\def\eeq{\end{eqnarray}}



\newcommand{\bra}[1]{\langle{#1}|}
\newcommand{\ket}[1]{|{#1}\rangle}


\usepackage{sectsty}
\sectionfont{\normalsize}
\subsectionfont{\normalsize\normalfont\itshape}
\setlength{\cftbeforesecskip}{0.3mm}



\newtheoremstyle{thm}
{12pt}
{12pt}
{\itshape}
{}
{\itshape\bfseries}
{}
{1em}
{}

\theoremstyle{thm}
\newtheorem{theorem}{Theorem}
\newtheorem{lemma}[]{Lemma}
\newtheorem{proposition}[]{Proposition}
\newtheorem{definition}[]{Definition}
\newtheorem{ex}[]{Example}

\begin{document}                                             
   
\thispagestyle{empty}


\begin{center}
 {\Large{\bf {Learning adiabatic quantum algorithms for solving optimization problems}} }

\bigskip
\bigskip

{\sc {Davide Pastorello}$^a$ and {Enrico Blanzieri$^b$}}

\bigskip

{
$^a$ Department of Mathematics, University of Trento\\
Trento Institute for Fundamental Physics and Applications \\via Sommarive 14, 38123 Povo (Trento), Italy\\ ~~E-mail: d.pastorello@unitn.it \\[10pt]
$^b$ Department of Engineering and Computer Science, University of Trento, \\via Sommarive 14, 38123 Povo (Trento), Italy\\ ~~E-mail: enrico.blanzieri@unitn.it\\[10pt]
}
                                                       
\end{center}

\sp

\abstract{\noindent An adiabatic quantum algorithm is essentially given by three elements: An initial Hamiltonian with known ground state, a problem Hamiltonian whose ground state corresponds to the solution of the given problem and an evolution schedule such that the adiabatic condition is satisfied. A correct choice of these elements is crucial for an efficient adiabatic quantum computation.  In this paper we propose a hybrid quantum-classical algorithm to solve optimization problems with an adiabatic machine assuming restrictions on the class of available problem Hamiltonians. The scheme is based on repeated calls to the quantum machine into a classical iterative structure. In particular we present a technique to learn the encoding of a given optimization problem into a problem Hamiltonian and we prove the convergence of the algorithm. Moreover the output of the proposed algorithm can be used to learn efficient adiabatic algorithms from examples.}

\section{Introduction}

Adiabatic Quantum Computing (AQC) was initially proposed as an application of the adiabatic theorem to solve optimization problems \cite{Farhi, Farhi2}, then it was demonstrated that it is a universal model of quantum computing \cite{aharonov}. It offers promising implementation perspectives and interesting connections with condensed matter physics \cite{albash}. A computation is implemented encoding the solution of a given problem into the ground state of the \emph{problem Hamiltonian} $H_P$ and considering the time evolution of a quantum system described by the time-dependent Hamiltonian:
\beq
H(t)=[1-s(t)]H_I(t)+s(t)H_P\qquad t\in[0,\tau],
\eeq      
where $s:[0,\tau]\rightarrow[0,1]$ is a monotone smooth function such that $s(0)=0$ and $s(\tau)=1$, it is called \emph{evolution schedule} and $\tau$ is the total evolution time. $H_I$ is the \emph{initial Hamiltonian}, with a known ground state, which does not commute with $H_P$. Starting from the ground state of $H_I$ and providing that the evolution is sufficiently slow, the final state of the system is the ground state of $H_P$ with high probability. Thus an \emph{adiabatic algorithm} is given by the triple $(H_I, s, H_P)$.
At the end of the evolution a measurement process gives the output of the computation. 

For a successful computation the evolution time must be large enough to ensure the adiabatic condition without destroying the computation efficiency. In general the run time of AQC depends on the choices of $H_P$, $H_I$, and $s$, its rigorous determination is typically hard \cite{albash}. Assuming that the initial Hamiltonian and the evolution schedule are fixed, a crucial issue is encoding a given problem into a problem Hamiltonian that can be implemented with the available resources. Moreover, for a given problem different problem Hamiltonians exist so another issue is selecting the best one to decrease the evolution time. 

In this paper we propose a hybrid quantum-classical algorithm, called  \emph{Adiabatic Quantum Computing Learning Search} (AQCLS), to find the solution of a given optimization problem using an adiabatic quantum machine. However we do not encode the problem into a Hamiltonian a priori, instead AQCLS is based on a classical iterative structure with repeated runs of an adiabatic quantum machine and a learning mechanism to induce modifications of the problem Hamiltonian towards a better encoding. Assuming that not every self-adjoint operator can be selected as problem Hamiltonian but there are limitations imposed by the physical architecture, the algorithm implements a random search in the space of \emph{available problem Hamiltonians} and a {tabu-inspired search} in the space of problem solutions. By encoding energetic penalties of already-visited solutions within a simulated annealing structure, the search is guided towards the solution of the optimization problem and converges to a corresponding problem Hamiltonian. 

AQCLS shares with our recent \emph{Quantum Annealing Learning Search} (AQLS) \cite{ED} the iterative approach to define the problem representation into the quantum architecture and the tabu-inspired search strategy. The main difference derives from the different nature of the considered quantum architectures, an adiabatic computer is a universal machine \cite{aharonov} instead the quantum annealer considered in \cite{ED} is a specific-purpose machine. In fact the optimization problems tackled by AQCLS are not limited to \emph{Quadratic Unconstrained Binary Optimization} (QUBO) problems like done in AQLS. 

In the next section we briefly review the adiabatic theorem and some crucial issues in AQC that we try to encompass with the proposed technique. In Section \ref{AQLS} we introduce the main idea of the considered tabu-inspired search with the learning mechanism of problem encoding into a $H_P$. Section \ref{convergence} is devoted to the convergence proof.

\section{Hard tasks in AQC}

Let us briefly review the content of adiabatic theorem: Assume to prepare a quantum system in the ground state $\psi_0$  of a given Hamiltonian, then one can change the Hamiltonian smoothly in time. If the change is sufficiently slow then the system remains in the istantaneous ground state with high probability. The time-dependent Hamiltonian is described by a smooth one-parameter family of self-adjoint operators $\{H(t)\}_{t\in[0,\tau]}$ in the Hilbert space $\sH$ of the considered quantum system. The dynamics of the system that is prepared in the ground state $\psi_0\in\sH$ (i.e. the pure state given by the eigenvector corresponding to the minimum of the spectrum) is given by the solution of the Schr\"odinger equation
\beq\label{SE}
i\hbar\frac{d}{dt} \psi(t)=H(t)\psi(t)\qquad t\in[0,\tau],
\eeq
with the initial condition $\psi(0)=\psi_0$. Let us re-parametrize the time-dependent Hamiltonian as $\widetilde H(s):=H(\tau s)$ with the variable $s\in[0,1]$. For any value of $s$ we have the following eigenvalue problem:
\beq
\widetilde H(s)\ket{l,s}=E_l(s)\ket{l,s},
\eeq
where the index $l\in\{0,1,2,...\}$ labels the eigenvectors $\ket{l,s}$ and the eigenvalues $E_l(s)$ of $\widetilde H(s)$, so $E_0(s)$ is the minimum of the spectrum of $\widetilde H(s)$ and $\ket{0,s}$ the corresponding eigenvector, i.e. the ground state. 

A first formulation of the adiabatic theorem, assuming the non-degeneracy of the ground state for any $s\in[0,1]$, is the following:
\begin{theorem}
If $E_1(s)-E_0(s)>0$ for any $s\in[0,1]$ then:
\beq
\lim_{\tau\rightarrow +\infty} |\langle 0,1| \psi(\tau)\rangle|=1,
\eeq
where $\psi(\tau)$ is the solution of (\ref{SE}), with initial condition $\psi(0)=\ket{0,0}$, calculated in $t=\tau$.
\end{theorem}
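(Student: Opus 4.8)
The plan is to pass to the rescaled time variable $s=t/\tau$, in which the Schr\"odinger equation (\ref{SE}) becomes $i\hbar\,\frac{d}{ds}\psi(s)=\tau\,\widetilde H(s)\,\psi(s)$, so that the whole large-$\tau$ limit is encoded in a single prefactor. I would then expand the solution in the instantaneous eigenbasis with the dynamical phases factored out,
\beq
\psi(s)=\sum_l c_l(s)\,e^{-i\theta_l(s)}\ket{l,s},\qquad \theta_l(s)=\frac{\tau}{\hbar}\int_0^s E_l(u)\,du,
\eeq
where the initial condition $\psi(0)=\ket{0,0}$ translates into $c_0(0)=1$ and $c_l(0)=0$ for $l\geq 1$. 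The objective then becomes showing that $|c_0(1)|\to 1$ as $\tau\to+\infty$.

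Substituting this ansatz into the rescaled equation and projecting onto $\bra{k,s}$, the eigenvalue relation $\widetilde H(s)\ket{l,s}=E_l(s)\ket{l,s}$ cancels the diagonal energy contributions against $\dot\theta_l$ and leaves a first-order system for the coefficients,
\beq
\dot c_k(s)=-c_k(s)\,\langle k,s|\pa_s k,s\rangle-\sum_{l\neq k}c_l(s)\,e^{i(\theta_k-\theta_l)}\,\langle k,s|\pa_s l,s\rangle.
\eeq
Differentiating the eigenvalue equation and projecting yields the standard identity $\langle k,s|\pa_s l,s\rangle=\langle k,s|(\pa_s\widetilde H)|l,s\rangle/(E_l-E_k)$ for $k\neq l$, which is exactly where the spectral gap enters through the denominators. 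The diagonal term produces only a geometric phase and induces no transitions, so the entire effect to be controlled is the off-diagonal coupling.

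The crux is an oscillatory-integral estimate. Integrating the coupling equation and using that the phase difference $\theta_k(s)-\theta_l(s)=\frac{\tau}{\hbar}\int_0^s(E_k-E_l)\,du$ oscillates at a rate proportional to $\tau$, I would integrate by parts to trade the rapid oscillation for an explicit factor $1/\tau$. Here the gap hypothesis is indispensable: by continuity of $E_1-E_0$ on the compact interval $[0,1]$ together with $E_1(s)-E_0(s)>0$, the minimal gap $g:=\min_{s\in[0,1]}(E_1(s)-E_0(s))$ is strictly positive, so that $|E_k-E_l|\geq g$ bounds all denominators uniformly and rules out stationary points of the phase. Assuming the family $\{\widetilde H(s)\}$ is smooth enough that $\pa_s\widetilde H$ is bounded and the matrix elements $\langle k,s|\pa_s l,s\rangle$ are of bounded variation, the boundary and remainder terms of the integration by parts are $O(1/\tau)$, giving $|c_l(1)|=O(1/\tau)$ for every $l\neq 0$.

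Finally, since the propagator generated by the self-adjoint $\widetilde H(s)$ is unitary, the norm $\sum_l|c_l(s)|^2=1$ is conserved; combining this with the off-diagonal estimate yields $|c_0(1)|^2=1-\sum_{l\geq 1}|c_l(1)|^2=1-O(1/\tau^2)$, whence $|\langle 0,1|\psi(\tau)\rangle|=|c_0(1)|\to 1$, which is the claim. The main obstacle I anticipate is making the integration-by-parts argument rigorous in the presence of infinitely many excited levels: controlling the sum over $l$ requires either a gap-uniform summability bound or a restriction to the finite-dimensional setting relevant here, and one must verify that the assumed regularity of $s\mapsto\widetilde H(s)$ is enough for the oscillatory estimate to hold both term by term and in aggregate.
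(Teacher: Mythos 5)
The paper never proves this theorem: it is quoted as the classical adiabatic theorem, with the cited references (Teufel; Jensen--Ruskai--Seiler) supplying the quantitative refinement stated afterwards as Theorem 2. So your proposal must be judged against the standard proofs in that literature, and there it has one genuine gap. Your integration-by-parts scheme relies on the identity $\langle k,s|\pa_s l,s\rangle=\langle k,s|(\pa_s\widetilde H)|l,s\rangle/(E_l(s)-E_k(s))$ for \emph{all} pairs $k\not=l$, and you then assert that the minimal gap $g:=\min_{s}(E_1(s)-E_0(s))>0$ ``bounds all denominators uniformly.'' It does not. The hypothesis of the theorem separates only the ground level from the rest of the spectrum (it gives $E_l(s)-E_0(s)\geq E_1(s)-E_0(s)\geq g$ for $l\geq 1$); it says nothing about differences $E_k(s)-E_l(s)$ with $k,l\geq 1$, which are allowed to vanish. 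At such excited-level crossings your denominators blow up and, worse, a smooth (or even continuous) labelled eigenbasis $s\mapsto\ket{l,s}$ need not exist at all, so the expansion on which the entire argument rests is not available under the stated hypothesis. There is also a smaller technical slip: since the $c_l$ appear on the right-hand side of the coupled system, a single integration by parts only controls the first-order term; a rigorous bound needs a Volterra/Gronwall iteration on top of it.

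The standard way to close the gap --- and the route taken in the reference the paper cites for Theorem 2 --- is Kato's projection method: one works only with the spectral projection $P(s)$ onto the instantaneous ground state, which is smooth in $s$ whenever the ground level stays isolated (excited levels may cross freely), and compares the true evolution with the auxiliary ``adiabatic'' evolution generated by $\widetilde H(s)$ corrected by the term $\frac{i\hbar}{\tau}[\dot P(s),P(s)]$, which exactly intertwines $P(0)$ with $P(s)$. The integration by parts is then performed using the reduced resolvent $(\widetilde H(s)-E_0(s))^{-1}(1-P(s))$, whose norm is bounded by $1/g$, so only the gap actually assumed in the theorem ever enters, and no sum over individual excited levels arises (which also disposes of the summability worry you correctly flag in infinite dimensions). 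Your expansion-in-eigenbasis argument is the classical Born--Fock heuristic and becomes a correct proof if you add the strictly stronger assumptions that all levels remain simple and mutually separated and that the dimension is finite; but as a proof of the theorem as stated, with only $E_1-E_0>0$ assumed, it does not go through unless recast in terms of the ground-state projection.
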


\noindent
This statement ensures that the state of the evolving system remains close to the ground state of $H(t)$, for any $t\in[0,\tau]$, if the evolution time is large. 
In order to evaluate how big $\tau$ should be to obtain an acceptable probability to be in the ground state at the end of the evolution, we need a more refined result like the following (cf. \cite{teufel} and \cite{jensen}):

\begin{theorem}
Let $\{H(t)\}_{t\in[0,\tau]}$ be a time-dependent Hamiltonian and $\{\widetilde H(s)\}_{s\in[0,1]}$ be the re-scaled Hamiltonian such that $\lambda(s):=E_1(s)-E_0(s)>0$ for any $s\in[0,1]$.
\\
If:
\beq\label{adiabatic condition}
\tau\geq\frac{4}{\epsilon}\left[\frac{\parallel\dot{\widetilde H}(0)\parallel}{\lambda(0)^2}+\frac{\parallel\dot{\widetilde H}(1)\parallel}{\lambda(1)^2}+\int_0^1 ds \left(10\frac{\parallel\dot{\widetilde H}(s)\parallel}{\lambda^3}+\frac{\parallel\dot{\widetilde H}(s)\parallel}{\lambda}\right)\right]
\eeq
where $\lambda:=\min_s\lambda(s)$, $\epsilon\in(0,1)$ and $\parallel\,\,\,\parallel$ is the standard operator norm, then:
\beq
\parallel \psi(\tau)-\ket{0,1}\parallel\leq\epsilon,
\eeq
where $\psi(\tau)$ is the solution of (\ref{SE}), with initial condition $\psi(0)=\ket{0,0}$, calculated in $t=\tau$.
\end{theorem}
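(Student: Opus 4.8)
The plan is to follow the standard route to the quantitative adiabatic theorem (as in Teufel and in Jansen--Ruskai--Seiler), organized around Kato's intertwining propagator, and to read off the explicit constants from the integration by parts that produces the overall factor $1/\tau$. First I would pass to the rescaled time $s=t/\tau$, so that on $[0,1]$ the true evolution $U(s)$ is generated by $\tau\widetilde H(s)$ (setting $\hbar=1$) and $\psi(\tau)=U(1)\ket{0,0}$. Let $P(s):=\ket{0,s}\bra{0,s}$ be the ground-state projector and let $R(s)$ be the reduced resolvent, i.e. the inverse of $\widetilde H(s)-E_0(s)$ on the range of $1-P(s)$; the gap hypothesis gives the basic estimate $\|R(s)\|\le\lambda(s)^{-1}\le\lambda^{-1}$. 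The object to control is the leakage $\|(1-P(1))\psi(\tau)\|$ out of the instantaneous ground state, from which the asserted bound $\|\psi(\tau)-\ket{0,1}\|\le\epsilon$ follows once the dynamical and geometric phase is fixed by the usual convention.

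Next I would introduce Kato's adiabatic propagator $U_A(s)$, the solution of $i\,\dot U_A(s)=\big(\tau\widetilde H(s)+i[\dot P(s),P(s)]\big)U_A(s)$ with $U_A(0)=1$. Since $[\dot P,P]$ is anti-self-adjoint, the counterterm $i[\dot P,P]$ is self-adjoint and $U_A$ is unitary; its purpose is that $U_A$ exactly intertwines the projectors, $U_A(s)P(0)=P(s)U_A(s)$, so that $U_A(s)\ket{0,0}$ stays in the one-dimensional ground space for every $s$ and equals $\ket{0,1}$ up to phase at $s=1$. It therefore suffices to estimate $\|(U(1)-U_A(1))\ket{0,0}\|$.

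For this I would apply Duhamel's formula to $W(s):=U(s)^\ast U_A(s)$, which gives $\dot W(s)=U(s)^\ast[\dot P(s),P(s)]U_A(s)$ and hence
\[
(U(1)-U_A(1))\ket{0,0}=U(1)\int_0^1 U(s)^\ast[\dot P(s),P(s)]\,U_A(s)\ket{0,0}\,ds .
\]
Differentiating $\widetilde H P=E_0 P$ and inverting on the complement yields the identity $\dot P\,P=-R\,\dot{\widetilde H}\,P$, so that $[\dot P,P]U_A\ket{0,0}=\dot P\,P\,U_A\ket{0,0}=-R\,\dot{\widetilde H}\,U_A\ket{0,0}$ is purely off-diagonal (and in particular $\|\dot P\|\le 2\|\dot{\widetilde H}\|/\lambda$). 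Taken at face value this integrand is only $O(1)$ in $\tau$; the decisive point is that on the off-diagonal sector $U(s)^\ast(\cdots)U_A(s)$ carries a relative phase oscillating at frequency $\sim\tau\lambda$, which I would convert into a prefactor $(\tau\lambda)^{-1}$ by rewriting $R\,\dot{\widetilde H}$ through the commutator with $\widetilde H-E_0$ and integrating by parts, spending the fast phase against the energy denominator. This integration by parts is the heart of the argument and the step I expect to be the main obstacle.

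Finally I would collect the pieces. The integration by parts produces a primitive carrying two resolvent factors, of the type $R^2\dot{\widetilde H}$; its endpoint values give the boundary contributions $\tau^{-1}\big(\|\dot{\widetilde H}(0)\|/\lambda(0)^2+\|\dot{\widetilde H}(1)\|/\lambda(1)^2\big)$, and its $s$-derivative gives the bulk integrand. Bounding this derivative by repeated use of $\|R\|\le\lambda^{-1}$ together with the estimates on $\dot P$ and $\dot R$, and using $\lambda=\min_s\lambda(s)$, yields after grouping constants an integrand controlled by the combination $10\,\|\dot{\widetilde H}(s)\|/\lambda^3+\|\dot{\widetilde H}(s)\|/\lambda$ of \nref{adiabatic condition}. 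Summing these bounds gives $\|\psi(\tau)-\ket{0,1}\|\le\tfrac1\tau B$, where $B$ is the bracket in \nref{adiabatic condition}; the hypothesis $\tau\ge\frac4\epsilon B$ then forces the right-hand side below $\epsilon/4\le\epsilon$, the surplus factor absorbing the passage from the leakage to the genuine norm distance and the choice of phase. The mechanism is standard; the delicate part is the faithful tracking of the numerical constants (the factors $4$ and $10$) through the resolvent-derivative estimates and the endpoint terms.
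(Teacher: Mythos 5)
You should first be aware that the paper does not prove this theorem at all: it is quoted as a known result, with the proof deferred to the cited references \cite{teufel} and \cite{jensen}. So there is no in-paper argument to compare against, and your reconstruction of the Kato-intertwiner route (rescaled generator $\tau\widetilde H(s)$, adiabatic propagator $U_A$ with counterterm $i[\dot P,P]$, Duhamel for $U^*U_A$, integration by parts against the fast phase) is exactly the strategy of those references. As an outline of the standard mechanism, it is correct.

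However, as a proof of \emph{this} statement there is a genuine gap, and it sits precisely where you yourself flag ``the main obstacle'': the integration by parts and the tracking of constants are promised, not performed, and the route you describe would not land on the inequality as written. Differentiating the primitive of type $R^2\dot{\widetilde H}$ produced by the integration by parts necessarily generates a term in $\ddot{\widetilde H}$ (from the explicit $s$-derivative of $\dot{\widetilde H}$) and terms quadratic in $\dot{\widetilde H}$ (from $\dot R\sim R\,\dot{\widetilde H}\,R$ and from $\dot P$), which is why the bound actually proved in \cite{jensen} has bulk integrand of the form $\|\ddot{\widetilde H}\|/\lambda^2+7\|\dot{\widetilde H}\|^2/\lambda^3$. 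That is structurally different from the integrand $10\|\dot{\widetilde H}\|/\lambda^3+\|\dot{\widetilde H}\|/\lambda$ appearing in (\ref{adiabatic condition}): the latter is linear in $\|\dot{\widetilde H}\|$ and contains no second derivative. One cannot pass from $\|\dot{\widetilde H}\|^2/\lambda^3$ to $\|\dot{\widetilde H}\|/\lambda^3$ by ``repeated use of $\|R\|\le\lambda^{-1}$''; it requires an extra normalization such as $\|\dot{\widetilde H}\|\le 1$, and the $\ddot{\widetilde H}$ term disappears only under an additional assumption such as linear interpolation, neither of which is in the hypotheses or in your sketch. So the final step, ``after grouping constants'' the bracket of (\ref{adiabatic condition}) emerges, is exactly the claim that remains unproved; to complete the argument you must either carry out the resolvent-derivative estimates and exhibit the constants $4$ and $10$, or state and prove the theorem in the Jansen--Ruskai--Seiler form and then show explicitly how (\ref{adiabatic condition}) follows from it under whatever supplementary hypotheses are needed. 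Your treatment of the phase convention and of the passage from leakage $\|(1-P(1))\psi(\tau)\|$ to the norm distance is acceptable and is the same silent convention the paper's statement uses.
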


\noindent
This statement implies that for any finite evolution time we have a nonzero probability of ending in a state that is different from the ground state of the problem Hamiltonian. The existence of this \emph{quantum step} is important for the convergence of the algorithm presented here. Inequality (\ref{adiabatic condition}) gives an estimation of the run time of an adiabatic computation, it depends on the time derivative of the Hamiltonian and on the spectral gap $\lambda$. In particular this latter quantity is hard to compute in general \cite{mc}. For this reason, determining the run time is a hard task in AQC so that for the most known adiabatic algorithms the run times are not exactly determined \cite{albash}. 

Given an optimization problem whose solution minimizes a cost function $f\!:X\rightarrow \bR$, where $X$ is a finite set, a standard adiabatic algorithm that solves the problem is physically based on a quantum system described in the Hilbert space $\sH\simeq \bC^{|X|}$. The elements of a fixed basis of $\sH$, called \emph{computational basis}, represent all the possible solutions. The computation is realized by the adiabatic time evolution of the system according to the Hamiltonian:
\beq\label{Standard}
H(t)=\left(1-\frac{t}{\tau}\right)H_I+\frac{t}{\tau}H_P\qquad t\in[0,\tau],
\eeq
providing the system is prepared in the known ground state of $H_I$ at $t=0$. In (\ref{Standard}) there is the routine choice of a linear interpolation where $\tau$ is the total evolution time. The crucial issue for a correct initialization of the adiabatic quantum machine is encoding the given problem into $H_P$ such that the ground state corresponds to the optimum. The standard choice in AQC is selecting a problem Hamiltonian that is diagonal in the computational basis, however this choice may be too severe, e.g. for some quantum systems it may produce many-body localization with deleterious effects on algorithm efficiency \cite{mbl}. A convenient choice to improve the performance of AQC can be given by a non-diagonal $H_P$ \cite{non-diagonal}. 
In general, the class of \emph{available problem Hamiltonians} is given by those that can be directly realized in laboratory and Hamiltonians that can be efficiently simulated by quantum circuits \cite{berry}. 

A remarkable aspect of AQC is the choice of the evolution schedule. Even if the standard choice is the linear interpolation used in (\ref{Standard}), in some cases this could be a poor strategy in terms of the run time. The most known case is that of \emph{adiabatic search} in an unstructured database: If the solution is represented by the element $\ket x$ of the computational basis, an adiabatic algorithm can be defined by an initial Hamiltonian whose ground state is the equal superposition of all the basis states and by a problem Hamiltonian of the form $H_P=\bI-\ket x\bra x$. It is well-known that if the evolution schedule is $s(t)=\frac{t}{\tau}$ on the interval $[0,\tau]$ then the run time of the algorithm grows linearly in the dimension of the database as for a classical exhaustive search. On the other hand if $s$ has the behavior of a hyperbolic tangent, the algorithm presents the Grover's quantum speed-up \cite{adiabatic Grover}.  

The hard calculation of spectral gaps and the arbitrariness of choosing $H_I$, $H_P$ and $s$ prevent the actual existence of general guiding principles to improve the performance of adiabatic quantum algorithms and various attempts in this direction have been recently proposed \cite{non-diagonal, shortcut, gap amplification}. In the next section we propose a strategy that combines a nonstandard variation of tabu search and simulated annealing to learn efficient problem Hamiltonians for adiabatic quantum computations.

\section{Adiabatic Quantum Computing Learning Search}\label{AQLS}

In this section we describe the general scheme of a heuristic search for the solution of an optimization problem by means of repeated calls of an adiabatic quantum machine within a classical iterative structure.
   
The general optimization problem that we face is the minimization of an objective function $f:X\rightarrow\bR$ where $X$ is a finite set. We consider a quantum system described in the $|X|$-dimensional Hilbert space $\sH$ and let $\{\ket x\}_{x\in X}$ be the computational basis. Assume for now that the initial Hamiltonian $H_I$ for AQC is fixed and its known ground state $\ket{\Phi_0}$ is given by a coherent superposition of all the basis states:
\beq\label{phi0}
\ket{\Phi_0}:=\sum_{x\in X} b_x \ket x\quad\mbox{with}\quad b_x\not =0 \,\,\forall x\in X \,\,\,\mbox{and}\,\,\,  \sum_{x\in X} |b_x|^2=1.
\eeq
\noindent
Let us assume that an evolution schedule $[0,\tau]\ni t\mapsto s(t;\tau)\in[0,1]$ is fixed, where $\tau$ is the total evolution time. For instance $t\mapsto s(t;\tau)$ could be linear $s(t;\tau)=t/\tau$, or a \emph{smoothstep function}  like $s(t;\tau)=3(t/\tau)^2-2(t/\tau)^3$, anyway the value of $\tau$ determines how slow the Hamiltonian interpolation is.

The  
class of available problem Hamiltonians is described by a parametrized family of self-adjoint operators $\{H_P^{(w)}\}_{w\in W}$ where $W\subseteq \bR^n$ and $[H_P^{(w)}, H_I]\not=0$ for all $w\in W$. Thus we are assuming that the selection of a problem Hamiltonian is specified by $n$ real parameters, moreover we require that the function $w\mapsto H_P^{(w)}$ is continuous\footnote{Continuity w.r.t. Euclidean topology on $W$ and topology induced by the operator norm.} to ensure that a small perturbation of the control parameters corresponds to a small change of the problem Hamiltonian.

\begin{ex}
In order to give an example of a class of available Hamiltonians in a real machine, let us consider the well-known D-Wave machine.
However let us specify that the D-Wave machine is a \emph{quantum annealer}. In quantum annealing (QA) the quantum system is generally coupled to the environment, so the evolution is characterized by dissipation and decoherence, instead in this paper we consider adiabatic evolutions of a closed quantum system. Moreover QA does not assume that the entire computation takes place in ground state like in AQC \cite{mc}. Here we consider the D-Wave available Hamiltonians just to give an example of $\{H_P^{(w)}\}_{w\in W}$.
The D-Wave hardware architecture is given by a network of qubits arranged on the vertices of a 
sparse graph $(V,E)$ where the edges represent the couplings among them.
The available problem Hamiltonians are operators on the Hilbert space $\sH\simeq (\bC^{2})^{\otimes |V|}$ of the following form:
\beq
H_P^{(\theta_i, \theta_{ij})}=\sum_{i\in V} \theta_i\sigma_z^{(i)}+\sum_{(i,j)\in E} \theta_{ij} \sigma_z^{(i)}\sigma_z^{(j)},
\eeq
where $\sigma_z^{(i)}$ acts on the ith qubit as the Pauli matrix $\sigma_z$ and on the other tensor factors as the identity. In this case the available problem Hamiltonians are labelled by the real parameters $\theta_i$, $\theta_{ij}$ for $i,j\in V$.
\end{ex}
 
The scheme we propose here contains the \emph{generation phase} of candidate solutions, the evaluation of the objective function and the \emph{acceptance phase}. The first phase is quantum, the other two are classical. The generation phase is realized calling the adiabatic machine initialized with an available problem Hamiltonian $H_P^{(w)}$, then the evolution is set with a certain evolution time $\tau$ ending in the ground state $H_P^{(w)}$ with some probability $p$. At the beginning of the search the problem Hamiltonian is randomly selected thus it does not encode the properties of the objective function, moreover the evolution time could be too short to ensure an adiabatic evolution, so the procedure starts with a randomized generation of candidate solutions. The main idea is to equip the search with a \emph{learning mechanism} in order to modify the problem Hamiltonian used to generate a candidate towards a better encodings of the problem.

The keystone of the proposed learning mechanism is a tabu-inspired search based on the following argument: To impose a penalty on the element $\hat x\in X$ to discourage 
its election as candidate solution, we implement a Hamiltonian such that $\ket{\hat x}$ is an excited state far from the ground state. A simple choice is $H=A\ket{\hat x}\bra{\hat x}$ with $A>0$. 

\begin{definition}
Let $\{\ket {x_1},...,\ket{x_r}\}$ be a collection of quantum states from the computational basis. Assume repetitions are allowed, i.e. it may be $x_i=x_j$ for some $i\not =j$. The corresponding \textbf{tabu Hamiltonian} is defined as:
\beq\label{TH}
H_{tabu}:=A\sum_{I=1}^r \ket{x_i}\bra{x_i}\qquad \mbox{with}\qquad A>0.
\eeq
\end{definition}  

\noindent
$H_{tabu}$ is a diagonal operator in the computational basis with eigenvalues given by nonnegative multiples of $A$. Without loss of generality, in the following we set $A$ to 1, then $H_{tabu}$ presents eigenvalues that are nonnegative integers. Let us stress that (\ref{TH}) is an obvious choice of a Hamiltonian which energetically penalizes a set of states, a generalized tabu Hamiltonian is any diagonal operator admitting the given states as excited ones. 
\\
The ideas presented above are the core of the hybrid quantum-classical algorithm AQCLS (Algorithm \ref{Schema}).  
AQCLS is based on a random search among the available problem Hamiltonians that is guided by a tabu-inspired search in the solution space (updating the tabu Hamiltonian) towards a better encoding of the given problem. The candidate solutions of the problem are generated running the adiabatic machine with modified problem Hamiltonians, the evolution time gradually increases during the search. 
 Let us illustrate the proposed algorithm.

 \begin{algorithm}[ht!]\label{Alg}
\footnotesize

\vspace{-1cm}

\KwData{Family of available problem Hamiltonians $\{H_P^{(w)}\}_{w\in W}$, initial Hamiltonian $H_I$, evolution schedule $s(t;\tau)$}
\KwIn{$f(x)$ to be minimized, minimum evolution time $t_{min}$, evolution time increasing step $\nu$, initial vector of parameters $w_0$, maximum variance $\sigma^2_{max}$, variance decreasing rate $\eta$, number $N$ of iterations with constant variance and constant evolution time, evolution probability $q$, termination parameter $N_{max}$, maximum number $i_{max}$ of iterations}
\KwResult{$H_P$ problem Hamiltonian, $\tau$ evolution time, $x^*$ minimum point of $f(x)$ } 
\SetKwProg{ffun}{function}{:}{}
$\tau\gets t_{min}$, $H_{tabu}\gets 0$\;
$\sigma^2\gets\sigma^2_{max}$, $w^*\gets w_0$\;
randomly initialize two Hamiltonians $H_P^{(w_1)}$ and $H_P^{(w_2)}$ from $\{H_P^{(w)}\}_{w\in W}$ \;
prepare the state $\ket{\Phi_0}$ \tcp*[h]{\it defined in formula (\ref{phi0})};\\
evolve twice according to Hamiltonians $[1-s(t;\tau)]H_I+s(t;\tau)\,H_P^{(w_i)}$ with $i=1,2$\;
measure the final states w.r.t. computational basis obtaining outcomes $x_1$ and $x_2$; \tcp*[h]{\it the final states correspond to ground states of $H_P^{(w_1)}$ and $H_P^{(w_2)}$ with probabilities $p_1$ and $p_2$ respectively}\\ 
evaluate  $f(x_1)$ and $f(x_2)$\;
\If{$f(x_1)\not =f(x_2)$}{use the best to initialize $x^*$\  and the Hamiltonian $H_P^{(w^*)}$\;
use the worst to initialize $x'$\;
initialize the tabu Hamiltonian: $H_{tabu}=\ket{x'}\bra{x'}$\;
}
$d\gets 0; e\gets 0; i\gets 0$\; 
\Repeat{ $i=i_{max}$ or $d+e\geq N_{max}$ } {
\If{$N$ divides $i$}
{$\sigma^2 \gets \sigma^2-\eta\,\sigma^2$\;
$\tau\gets \tau+\nu$\;}
initialize Hamiltonian $H_P^{(w)}$ sampling $w$ according to the normal distribution $\phi_{w^*,\sigma^2}$\;
prepare the state $\ket{\Phi_0}$\; 
with probability $1-q$ measure w.r.t. the computational basis otherwise evolve according to the Hamiltonian $[1-s(t;\tau)]\,H_I+s(t;\tau)\,(H_P^{(w)}+H_{tabu})$ and measure
 w.r.t. computational basis. Find  the outcome ${x'}$\;
  \eIf{\emph{$x' \not= x^*$}} {
  evaluate $f(x')$\;
   \eIf{\emph{$f(x')<f(x^*)$}}{$swap(x',x^*)$; $H_P^{(w^*)} \gets H_P^{(w)}$; \tcp*[h]{\it \emph{$x'$} is better}\\ 
use $x'$ to update the tabu Hamiltonian: $H_{tabu} \gets H_{tabu}+\ket{x'} \bra{x'}$;}
 { $d\gets d+1$\;
  with probability $[\sigma^2/\sigma^2_{max}]^{(f(x')-f(x^*))}$ $swap(x',x^*)$; $H_P^{(w^*)} \gets H_P^{(w)}$; \tcp*[h]{\it \emph{$x'$} is worse (suboptimal acceptance)} 
 }
    }
 {$e\gets e+1$\;}   
      $i\gets i+1$\;
  }
\Return $H_P\equiv H_P^{(w^*)}+H_{tabu}$, $\tau$, $x^*$\;

\vspace{0.5cm}

\caption{\it Hybrid quantum-classical algorithm to solve an optimization problem learning a corresponding problem Hamitonian of an adiabatic quantum machine.}
\label{Schema}
\end{algorithm}
\FloatBarrier

\noindent
\textbf{Candidate solution generation and testing}. Initially there are a random selection of two Hamiltonians (Algorithm \ref{Schema}, line 3) and the preparation of the quantum hardware in the ground state of $H_I$ (line 4), then two evolutions are set with evolution time $\tau=t_{min}$ (line 5) in order to generate two candidate solutions by measurements on the final states (line 6). The probabilities $p_1$ and $p_2$ are not explicitly determined and depend on: $t_{min}$, $H_I$, the problem Hamiltonians, and the evolution schedule. The candidates are tested: The best $x^*$ is used to initialize the new problem Hamiltonian and the worst $x'$ is used to update the tabu Hamiltonian (lines 8-12).

\noindent
The new problem Hamiltonian $H_P^{(w)}$ is initialized from $x^*$ sampling $w\in W$ according to a normal distribution centered in $w^*$ (where $H^{(w^*)}_P$ is the problem Hamiltonian used to generate the best candidate) with variance $\sigma^2$:
\beq
\phi_{w^*,\sigma^2}(w)=\frac{1}{\sqrt {2\pi\sigma^2}} \exp\left[-\frac{\parallel w-w^*\parallel^2}{2\sigma^2}\right].
\eeq

 \noindent
In the iterative part (lines 15-34) new candidate solutions are repeatedly generated and tested.
 The variance periodically decreases during the search from a maximum value $\sigma_{max}^2$ to zero (line 16) and contextually the evolution time of the adiabatic machine increases from $t_{min}$. 
Lines 19 and 20 contain the initialization of the current problem Hamiltonian $H_P^{(w)}$ and the preparation of the ground state $\ket{\Phi_0}$ of $H_I$. The effect of line 21 is that, with probability $q\gg0$, the new candidate solution is generated setting the time evolution in $[0,\tau]$ according to the Hamiltonian $H(t)=[1-s(t;\tau)]\,H_I+s(t;\tau)\,(H_P^{(w)}+H_{tabu})$, or, with probability $1-q$, no evolution is set and a measurement is performed immediately after the preparation of $\ket{\Phi_0}$. These probabilistic alternatives are important for the algorithm convergence, as explained in the next section.
\\
\\
\textbf{Candidate solution acceptance}. After the evaluation of the generated candidate solution $x'$ (line 23), there are the comparison with the best solution so far $x^*$ and the application of the acceptance rule (lines 24-30). If $x'$ is better than $x^*$, it is accepted as current solution with probability 1 and tabu Hamiltonian is updated (line 25 and 26). If $x'$ is not better than $x^*$ then it is accepted with probability $[\sigma^2/\sigma^2_{max}]^{(f(x')-f(x^*))}$ (line 29).


\vspace{0cm}

\noindent
Thus the acceptation probability of $x'$, given $x^*$ as the best candidate so far, is:
\beq\label{AP}
P(x'|x^*)=\left\{
\begin{array}{cc}
1\hspace{3.3cm} & f(x')<f(x^*)\\

[\sigma^2/\sigma^2_{max}]^{(f(x')-f(x^*))} & f(x')\geq f(x^*)
\end{array}
\right.
\eeq

\noindent
that can be compared with the acceptation probability of Simulated Annealing (SA) at given temperature $T$:
\beq\label{SA}
P_{SA}(x'|x^*;T)=\left\{
\begin{array}{cc}
1\hspace{2cm} & f(x')<f(x^*)\,\\

e^{-\frac{f(x^*)-f(x')}{T}} & f(x')\geq f(x^*).  
\end{array}
\right.  
\eeq

\noindent
Therefore we can interpret the role of $\sigma$ as the temperature parameter $T$ of a simulated annealing process in the following way:  $T=-\log^{-1}(\sigma^2/\sigma_{max}^2)$. Mapping the classical part of AQCLS into a SA process is crucial for the proof of convergence of the algorithm in Section \ref{convergence}.
\\
\\
\textbf{Termination}. There are two termination conditions (line 35): One involves the achievement of the maximum number of iterations, the other corresponds to the convergence to a solution of the optimization problem. The counter $e$ (line 32) counts the number of consecutive times that the current solution $x^*$ is generated; the counter $d$ (line 28) counts the number of times that the current solution and the new candidate solution differ and the current one is not worse. When $d+e$ reaches the maximum value, the algorithm returns $x^*$ as optimum of $f$, the Hamiltonian $H_P\equiv H_P^{(w^*)}+H_{tabu}$ as the problem Hamiltonian whose ground state is $\ket{x^*}$ with high probability, and the corresponding evolution time $\tau$ of the adiabatic algorithm $(H_I, s(t;\tau), H_P)$.
\\
\\
As mentioned above, the variance of the normal distribution used to explore the space of available Hamiltonians can be interpreted as the temperature of a SA process, so it decreases during the search (line 16). On the other hand the evolution time increases during the search (line 17) towards the evolution time of the adiabatic algorithm defined by the limit problem Hamiltonian (whose existence is proved in the next section).  
The counter $i$ (line 33) counts the number of iterations, the \textbf{if} statement (lines 15-18) provides that for $N$ iterations we have a fixed values of variance and evolution time. The motivation of this choice is twofold: Regarding the variance, we obtain a structure of temperature levels of SA that is important for convergence, in the meanwhile we prevent that the evolution time grows too fast. The constant increasing rate $\nu$ of the evolution time is an arbitrary choice and line 17 can be generalized to $\tau\gets h(\tau)$ where $h$ is a monotone increasing function so that we have large increments at the beginning of the search and smaller afterwords. 
\\
Let us stress that the algorithm does not only return the problem solution but also provides a problem Hamiltonian encoding the given problem and the total evolution time of the corresponding adiabatic computation. This means that for a class of optimization problems it is possible  to run several times the algorithm on different instances and learn the initialization of the adiabatic quantum machine from examples. In this way the learned mapping could be used to initialize other instances of the same class of problems.    
\\
The family of available problem Hamiltonians $\{H_P^{(w)}\}_{w\in W}$ can be assumed to coincide with the whole set of Hamiltonians that can be physically implemented in the quantum machine. However an interesting problem is how one could restrict to a subfamily and consequently constrain the search in order to increase the adiabatic machine efficiency for a given objective function. Its solution could add a useful learning bias for some class of optimization problems, and further research is needed to investigate this issue.

\section{Convergence of AQCLS}\label{convergence}

\noindent
The classical part of AQCLS algorithm is based on a simulated annealing procedure with modified generation probabilities depending on a temperature parameter represented by the variance $\sigma^2$. In order to prove the convergence we apply some results about the convergence of SA with non-constant generation matrix and about Markov chains \cite{tabu, af1, af2} that are summarized in \cite{ED}. We will show that the convergence of our hybrid algorithm is implied by some properties of the SA-like classical part that are closely related to the features of the quantum part.  

In the previous section we have described the candidate solution generation realized by Algorithm \ref{Schema}. Given a current solution $x_i\in X$ and a value of the variance $\sigma^2$, there is a certain probability of generating the new candidate solution $x_j\in X$. In this sense the generation phase can be described by a family of $|X|\times |X|$ stochastic matrices $\{A(\sigma^2)\}_{\sigma^2>0}$ that we call \emph{variance-dependent generation matrix}, where the matrix element $a_{ij}(\sigma^2)$ is defined as the probability for generating the candidate solution $x_j$ given the current solution $x_i$ and variance $\sigma^2$. We are not interested in the explicit computation of generation probabilities instead let us introduce the notion of \emph{neighborhood graph}. 
\begin{definition}
Let $A$ be a stochastic matrix. The \textbf{neighborhood graph} induced by $A$ is the directed graph $G_{A}=(X,E)$ where $E:=\{(x_i,x_j) \,: \, a_{ij}>0 \}$.
\\
$G_A$ is said to be:
\\
i) \textbf{complete} if any pair of vertices is connected by a bidirectional edge;
\\
ii) \textbf{strongly connected} if for any pair of vertices $(x_i, x_j)$ there is a directed path from $x_i$ to $x_j$.
\end{definition}

Obviously completeness implies strong connectivity. The neighborhood graph associated to the generation phase of Algorithm \ref{Schema} is complete, as proved below, with a crucial effect on the algorithm convergence.

\begin{lemma}\label{compl}
The neighborhood graph $G_{A(\sigma^2)}$ associated to the candidate solutions generation of Algorithm \ref{Schema} is complete for any $\sigma^2>0$.
\end{lemma}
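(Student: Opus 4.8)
The plan is to reduce completeness of $G_{A(\sigma^2)}$ to the single inequality $a_{ij}(\sigma^2)>0$ for every ordered pair $(x_i,x_j)\in X\times X$. Since this statement is symmetric under exchange of the two indices, once it is established every pair of distinct vertices carries edges in both directions, which is exactly completeness in the sense of part (i) of the preceding definition. So the whole task is to produce a strictly positive lower bound on each generation probability, uniformly in the current solution and in $\sigma^2$.

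First I would isolate the generation mechanism in the iterative part of Algorithm \ref{Schema}. Given the current best $x^*$ and variance $\sigma^2$, a candidate is produced by sampling $w\sim\phi_{w^*,\sigma^2}$, preparing $\ket{\Phi_0}$, and then executing one of two mutually exclusive branches: with probability $q$ the adiabatic evolution driven by $[1-s(t;\tau)]H_I+s(t;\tau)(H_P^{(w)}+H_{tabu})$ is run before the final measurement, whereas with probability $1-q$ the prepared state $\ket{\Phi_0}$ is measured immediately in the computational basis. The key observation is that the second (no-evolution) branch by itself already forces every transition probability to be strictly positive, so the unknown and possibly concentrated output statistics of the adiabatic branch need never be controlled. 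Conditioning on the no-evolution branch, the measurement outcome is governed by the Born rule applied to $\ket{\Phi_0}$, so $x_j$ is obtained with probability $|\langle x_j|\Phi_0\rangle|^2=|b_j|^2$. By the standing assumption (\ref{phi0}) one has $b_x\neq 0$ for every $x\in X$, hence $|b_j|^2>0$. As the adiabatic branch contributes a nonnegative amount, this yields
\[
a_{ij}(\sigma^2)\;\geq\;(1-q)\,|b_j|^2\;>\;0
\]
for all $i,j$ and all $\sigma^2>0$; note that the bound is in fact independent of $\sigma^2$, since the no-evolution branch ignores the sampled $H_P^{(w)}$. This establishes completeness.

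The only point that genuinely requires care — the conceptual crux rather than a technical obstacle — is recognizing that one must \emph{not} try to extract positivity from the quantum-evolution branch. The quantitative adiabatic theorem stated above only guarantees that, for large $\tau$, the final state is close to a single basis vector, so the evolution branch may assign vanishing probability to most outcomes $x_j$; all the mixing that makes $G_{A(\sigma^2)}$ complete is supplied precisely by the random-restart branch of weight $1-q$. This is exactly why the probabilistic alternative in line 21 is flagged as important for convergence. I would also make explicit the implicit hypothesis $q<1$, so that $1-q>0$, and remark that the same argument gives $a_{ii}(\sigma^2)\geq(1-q)|b_i|^2>0$, so no assumption on the adiabatic dynamics enters the proof at all.
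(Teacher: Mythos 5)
Your proof is correct, and it is a sharpened version of what the paper does. The paper's own proof of Lemma \ref{compl} is a single sentence: it asserts $a_{ij}(\sigma^2)\neq 0$ by appealing jointly to the \emph{quantum step} implied by the adiabatic theorem and to the probabilistic alternatives of line 21. You rely only on the second ingredient, and you are right to insist on this: the quantitative adiabatic theorem only provides an upper bound $\parallel\psi(\tau)-\ket{0,1}\parallel\leq\epsilon$ on the distance from the instantaneous ground state; it gives no lower bound on the overlap of $\psi(\tau)$ with any particular basis vector $\ket{x_j}$, and some of those amplitudes could vanish exactly, so strict positivity of every $a_{ij}$ cannot be extracted from the evolution branch. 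Your argument --- condition on the no-evolution branch of weight $1-q$, apply the Born rule to $\ket{\Phi_0}$, and use $b_x\neq 0$ for all $x\in X$ --- is the rigorous mechanism behind the lemma, and the uniform bound $a_{ij}(\sigma^2)\geq(1-q)\min_{x\in X}|b_x|^2$ you derive is exactly the bound the paper itself produces later, in the proof of Proposition \ref{conv}, to verify Hypothesis 2 of Proposition \ref{SAconvergence}; so your proof of the lemma already delivers that later step for free. Your explicit remark that one needs $q<1$ is also a genuine improvement: the paper only says the evolution probability satisfies $q\gg 0$ and never states $q<1$, although that is clearly intended, since otherwise the measurement-only branch has probability zero and the completeness claim would lose its only rigorous support.
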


\begin{proof}
The candidate solution generation calls the adiabatic quantum machine, and it does not rule out any solution because of the \emph{quantum step} implied by theorem \ref{adiabatic condition} and the probabilistic alternatives of line 21. Therefore $a_{ij}(\sigma^2)\not= 0$ for any $i,j\in\{1,2,...,|X|\}$ and any $\sigma^2>0$. 
\end{proof}

As anticipated in the algorithm description of the previous section, the calls to the adiabatic machine are embedded into a classical SA structure realized by an acceptance phase (lines 24-30). Let us recall that the variance $\sigma^2$ can be interpreted as a temperature parameter in these terms $T=-\log^{-1}(\sigma^2/\sigma_{max}^2)$, for this reason we apply some results about the convergence of SA processes characterized by a temperature-dependent generation matrix. 
In the practice of SA with temperature-dependent generation matrix, one can decrease temperature in the following way: He chooses a sequence of temperature levels $T_1\geq T_2\geq \dots \geq T_K\simeq 0$ running many iterations for each level as clarified in \cite{tabu}. Algorithm \ref{Schema} realizes the temperature levels in lines 15-18.

\begin{proposition}\label{prop1}
Algorithm \ref{Schema} converges.
\end{proposition}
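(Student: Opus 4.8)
The plan is to identify the classical acceptance phase of AQCLS with an inhomogeneous Markov chain on the finite solution space $X$ and to invoke the convergence theory for simulated annealing with a non-constant generation matrix, namely the results of \cite{tabu, af1, af2} collected in \cite{ED}. Concretely, I would define the one-step transition matrix $P(\sigma^2)$ whose off-diagonal entries are $p_{ij}(\sigma^2)=a_{ij}(\sigma^2)\,P(x_j\,|\,x_i)$, where $a_{ij}(\sigma^2)$ is the variance-dependent generation matrix and $P(x_j\,|\,x_i)$ is the acceptance probability (\ref{AP}), with the diagonal fixed so that each row sums to one. The state of the chain is the current solution $x^*$, which is allowed to move uphill with nonzero probability exactly as in standard SA. The convergence claim then reduces to showing that, as the variance is driven to zero through the temperature-level structure of lines 15--18, the law of $x^*$ concentrates on the set $\arg\min_{x\in X} f(x)$.

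First I would check the accessibility hypothesis required by the cited theorems. This is precisely where Lemma \ref{compl} enters: for every $\sigma^2>0$ the neighborhood graph $G_{A(\sigma^2)}$ is complete, hence strongly connected, so each generation matrix $A(\sigma^2)$ is irreducible and no solution is ever excluded. The decisive point, which I would stress, is that this holds uniformly in $\sigma^2$ thanks to the quantum step of the refined adiabatic theorem together with the probabilistic alternative of line 21: even a badly chosen $H_P^{(w)}$ cannot suppress any outcome, so the chain can never become permanently trapped.

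Next I would verify the simulated-annealing identification. Setting $T=-\log^{-1}(\sigma^2/\sigma^2_{max})$, I would check that $\sigma^2\mapsto T$ is a decreasing bijection of $(0,\sigma^2_{max})$ onto $(0,+\infty)$, so that driving the variance to zero realizes a genuine cooling $T\to 0^+$, and that under this substitution the acceptance rule (\ref{AP}) coincides term by term with the Metropolis rule (\ref{SA}). The staircase schedule of lines 15--18 then produces a decreasing sequence of temperature levels $T_1\geq T_2\geq\cdots\geq T_K\simeq 0$ with $N$ iterations run at each fixed level, which is the cooling format assumed in \cite{tabu}. With these three ingredients in place — irreducible generation, Metropolis acceptance, level-wise cooling — the cited convergence theorem yields that the chain converges to a stationary law supported on the global minimizers of $f$, so $x^*$ is a global optimum with probability tending to one; the companion outputs $H_P\equiv H_P^{(w^*)}+H_{tabu}$ and $\tau$ then follow from the accepted configuration at convergence.

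The hard part will be matching the genuinely time-varying generation mechanism to the precise hypotheses of the non-constant-generation-matrix results. Here the generation matrix depends not only on $\sigma^2$ but, through the accumulating tabu Hamiltonian $H_{tabu}$, on the entire history of accepted solutions, so the homogeneous Metropolis convergence theorem cannot be quoted directly; I must instead rely on the weaker statements of \cite{tabu, af1, af2} and verify that their conditions hold uniformly over every reachable configuration. The uniform accessibility needed for this is exactly what Lemma \ref{compl} supplies — completeness of $G_{A(\sigma^2)}$ holds for every $\sigma^2$ and, because the quantum step cannot be switched off, for every tabu state as well. Checking that the per-level iteration count $N$ and the number of levels $K$ in lines 15--18 are large enough for the level-wise chains to approach their quasi-stationary laws, rather than merely resembling a cooling schedule, is the step I expect to carry the real weight of the proof.
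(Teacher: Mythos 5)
Your core machinery is the same as the paper's: model the classical part of Algorithm \ref{Schema} as a Markov chain with variance-dependent transition matrix $m_{ij}(\sigma^2)=a_{ij}(\sigma^2)\,P(x_j|x_i)$, map the variance to a temperature via $T=-\log^{-1}(\sigma^2/\sigma^2_{max})$, and feed the connectivity provided by Lemma \ref{compl} into the convergence results of \cite{tabu, af1, af2}. However, you have aimed at the wrong statement. Proposition \ref{prop1} asserts only that the process \emph{converges}: for each $\sigma^2>0$ the chain $M(\sigma^2)$ has a unique stationary distribution $\pi_{\sigma^2}$, and the limit $\pi^*=\lim_{\sigma^2\to 0}\pi_{\sigma^2}$ exists. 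For this, strong connectivity of the neighborhood graph for every $\sigma^2$ (which completeness, Lemma \ref{compl}, implies) is the only hypothesis required, and the paper's proof stops exactly there. Concentration of $\pi^*$ on the global minimizers of $f$ is a strictly stronger claim: it is Proposition \ref{conv}, proved afterwards by checking the additional hypotheses of Proposition \ref{SAconvergence}.

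This mis-reduction produces a genuine gap in what you actually wrote, because the stronger theorem you invoke needs Hypothesis 2 of Proposition \ref{SAconvergence}: a lower bound $\delta>0$ on all positive generation probabilities, \emph{uniform in the temperature}. You assert that this uniformity ``is exactly what Lemma \ref{compl} supplies,'' but it is not: completeness of $G_{A(\sigma^2)}$ holds pointwise for each fixed $\sigma^2$ and says nothing about the entries $a_{ij}(\sigma^2)$ as $\sigma^2\to 0$; in principle they could degenerate to zero, destroying the conclusion. (Hypotheses 1 and 3 do follow from completeness for all $\sigma^2$; Hypothesis 2 does not.) The uniform bound comes instead from the explicit estimate that the paper makes only in the proof of Proposition \ref{conv}: with probability $1-q$ the candidate is generated by measuring $\ket{\Phi_0}$ directly (line 21), so every $x\in X$ is generated with probability at least $(1-q)\min_{x\in X}|b_x|^2=\delta$, independently of $\sigma^2$, of the current problem Hamiltonian, and of the accumulated tabu Hamiltonian. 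You had this ingredient in hand --- you cite line 21 when discussing Lemma \ref{compl} --- but never extracted the quantitative bound. Finally, your closing worry that $N$ and the number of temperature levels must be checked to be ``large enough'' is not needed for either proposition: the paper's convergence statements are asymptotic as $\sigma^2\to 0$ along the level structure of lines 15--18, not finite-time guarantees.
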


\begin{proof}
Let us consider a classical SA algorithm with temperature-dependent generation matrix $\{A(T)\}_{T>0}$ and acceptance probability $P_{SA}$ as defined in (\ref{SA}).
The SA algorithm can be modeled by a stochastic process with temperature-dependent transition matrix $\{M(T)\}_{T>0}$ defined by $m_{ij}(T)=a_{ij}(T)P_{SA}(x_i|x_j; T)$, where the state space represents the solution space of the considered optimization problem.  
Now let us follow the approach of \cite{tabu} where general results about Markov chains are applied \cite{af1, af2} to study the convergence of SA processes: For a fixed value of $T>0$, $M(T)$ is the transition matrix of a Markov chain, if the neighborhood graph $G_{A(T)}$ induced by $A(T)$ is strongly connected for any $T$ then $M(T)$ has a unique \emph{stationary distribution}\footnote{Let $M$ be the transition matrix of a Markov chain with state space $X=\{x_1,...,x_n\}$. The \emph{stationary distribution} $\pi$ of $M$ is a probability distribution on $X$ satisfying:
$\pi(x_j)=\sum_{i=1}^n m_{ij}\pi(x_i)$ $ \forall j=1,...,n$.}
 $\pi_T$ for any $T$ and the limit $\pi^*=\lim_{T\rightarrow 0} \pi_T$ exists. Therefore, if the property of strong connectivity is satisfied the stochastic process describing the SA algorithm converges as $T\rightarrow 0$ with limit distribution $\pi^*$. Algorithm \ref{Schema} realizes the structure of a SA with a temperature-dependent (variance-dependent indeed) generation matrix $\{A(\sigma^2)\}_{\sigma^2>0}$, in view of Lemma \ref{compl} the associated neighborhood graph is complete then strongly connected for any $\sigma^2$. As a consequence Algorithm \ref{Schema} converges as $\sigma^2\rightarrow 0$.

\end{proof}

The result above only ensures that AQCLS admits a limit distribution on the solution space $X$, now the crucial point is proving that such distribution is nonzero only on the optimum solutions of the given problem. For this goal, let us consider a statement that summarizes some convergence results \cite{tabu} on generalized SA processes with temperature-dependent generation probabilities.

\begin{proposition}\label{SAconvergence}
Let $\{A(T)\}_{T>0}$ be a temperature-dependent generation matrix and $P_{SA}$  be the acceptance probability defined in (\ref{SA}) by the objective function $f$. 
\\
If the following hypotheses are satisfied:
\\
\\
1. $A(T)$ is combinatorially symmetric for any $T>0$, i.e.:
$$a_{ij}(T)>0 \,\,\Leftrightarrow \,\,a_{ji}(T)>0\qquad \forall i, j , T. $$
2. There exists a $\delta>0$ such that for each $T>0$:
$$a_{ij}(T)>0\,\,\Rightarrow\,\, a_{ij}(T)\geq \delta\quad\mbox{whenever}\quad i\not =j.$$ 
3. The neighborhood graph of $A(T)$ does not depend on $T$.
\\
\\
then $\pi^*(x)>0$ if and only if $x\in\emph{arg}\!\min f$.
\end{proposition}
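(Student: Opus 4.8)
The plan is to reduce the claim to the low-temperature asymptotics of the stationary distributions $\pi_T$ and then to read off exactly which states keep positive mass in the limit $\pi^{*}=\lim_{T\to 0}\pi_T$, whose existence is already secured in Proposition~\ref{prop1}. For fixed $T>0$ the Markov chain has transition matrix $M(T)$ with off-diagonal entries $m_{xy}(T)=a_{xy}(T)\,e^{-(f(y)-f(x))^{+}/T}$ for $x\neq y$, where $(\cdot)^{+}$ is the positive part and I have rewritten the two branches of (\ref{SA}) as $\min\!\big(1,e^{-(f(y)-f(x))/T}\big)$. Assuming the fixed neighborhood graph is connected — which in our application is guaranteed by the completeness of Lemma~\ref{compl} — the chain is irreducible, so $\pi_T$ is unique, and I would express it through the Markov chain tree theorem as $\pi_T(x)\propto\sum_{g\in\cG(x)}\prod_{(u\to v)\in g} m_{uv}(T)$, where $\cG(x)$ denotes the spanning arborescences oriented toward the root $x$.

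Next I would extract the exponential rate. Setting $h(u,v):=(f(v)-f(u))^{+}$ for the cost of the edge $u\to v$, every tree weight splits into a temperature-independent prefactor times $e^{-V(g)/T}$ with $V(g)=\sum_{(u\to v)\in g} h(u,v)$. Here hypotheses~2 and~3 enter decisively: because the graph does not depend on $T$ and every nonzero $a_{uv}(T)$ lies in $[\delta,1]$, the prefactor of each admissible arborescence stays bounded away from $0$ and from $\infty$ uniformly in $T$. Hence, after normalization, $\pi_T(x)\asymp e^{-(W(x)-\min_y W(y))/T}$ up to such bounded factors, where the \emph{virtual energy} is $W(x):=\min_{g\in\cG(x)}V(g)$; letting $T\to 0$ gives $\pi^{*}(x)>0$ if and only if $W(x)=\min_{y}W(y)$.

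The heart of the argument is then the combinatorial identity $W(x)-f(x)=\text{const}$, which converts $\arg\min W$ into $\arg\min f$. I would prove it by tree surgery. Given states $x,y$ and a minimizing arborescence $g_x\in\cG(x)$, take the unique directed path $y=z_0\to z_1\to\cdots\to z_m=x$ it contains; deleting these edges and inserting the reversed ones $z_{i+1}\to z_i$ — which belong to the graph exactly because of the combinatorial symmetry of hypothesis~1 — produces an arborescence $g_y\in\cG(y)$. By the elementary identity $(a)^{+}-(-a)^{+}=a$ the weight change telescopes: $V(g_y)-V(g_x)=\sum_i\big(f(z_i)-f(z_{i+1})\big)=f(y)-f(x)$, so $W(y)\le W(x)+f(y)-f(x)$. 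Swapping $x$ and $y$ yields the opposite inequality, whence $W(x)-f(x)$ is independent of $x$; therefore $\min_y W(y)$ is attained precisely at the global minimizers of $f$, and together with the previous paragraph this gives $\pi^{*}(x)>0\iff x\in\arg\min f$ in both directions.

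I expect the main obstacle to lie not in the surgery algebra but in making the asymptotic step fully rigorous: one must show that, as $T\to 0$, the ratio of the two tree-theorem sums is indeed governed by the arborescences of minimal weight and that the surviving prefactors are strictly positive, so that $W$ alone decides the positivity of $\pi^{*}$. This is precisely where the uniform lower bound $\delta$ (hypothesis~2) and the $T$-independence of the neighborhood graph (hypothesis~3) are indispensable, whereas combinatorial symmetry (hypothesis~1) is the single ingredient that pins $W$ to $f$ up to an additive constant.
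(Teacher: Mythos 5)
Your proof is correct, but it takes a genuinely different route from the paper: the paper does not actually prove Proposition \ref{SAconvergence} at all. It presents it as a summary of known convergence results and justifies it by remarking that it is a variation of Theorem 1.3 of \cite{tabu}, in which the hypothesis of weak reversibility of the neighborhood graph is replaced by the (stronger) hypothesis of combinatorial symmetry of $A(T)$; the burden of proof is thus delegated to Faigle--Kern. Your argument, by contrast, is self-contained: you obtain the low-temperature behaviour of $\pi_T$ from the Markov chain tree theorem, use hypotheses 2 and 3 to confine every admissible arborescence prefactor to $[\delta^{|X|-1},1]$ uniformly in $T$ so that only the exponential rates $V(g)$ survive, and then use hypothesis 1 for the path-reversal surgery yielding $W(x)-f(x)=\mathrm{const}$, i.e. the Freidlin--Wentzell-type virtual energy coincides with $f$ up to an additive constant. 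This buys more than the paper's citation: it makes the role of each hypothesis explicit, and the two-sided bounds $\pi_T(x)\asymp e^{-(W(x)-\min_y W(y))/T}$ give exponential decay rates for non-optima rather than merely identifying the support of $\pi^*$. Two small caveats, both of which you essentially handle: (i) you correctly read the second branch of (\ref{SA}) as the Metropolis rule $e^{-(f(x')-f(x^*))/T}$ --- as printed, the exponent in (\ref{SA}) has the wrong sign, but (\ref{AP}) together with $T=-\log^{-1}(\sigma^2/\sigma^2_{max})$ confirms your reading; (ii) irreducibility is not literally implied by hypotheses 1--3 (a combinatorially symmetric graph can be disconnected), so the connectivity assumption you state explicitly --- guaranteed in the application by Lemma \ref{compl}, and implicit in the paper's use of a unique $\pi_T$ and of the limit $\pi^*$ --- is genuinely needed for the tree theorem and for the nonemptiness of each $\cG(x)$.
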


\noindent
The statement above is a variation of Theorem 1.3 of \cite{tabu} where the hypothesis of \emph{weak reversibility} of the neighborhood graph for any $T$ is substituted by the hypothesis of combinatorial symmetry of $A(T)$ for any $T$. Combinatorial symmetry is a stronger condition than weak reversibility defined in \cite{tabu}.\\ 
Now we are in position to prove our main result:

\begin{proposition}\label{conv}
Algorithm \ref{Schema} converges to a global optimum of $f$.
\end{proposition}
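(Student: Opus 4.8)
The plan is to reduce Proposition \ref{conv} to Proposition \ref{SAconvergence} by checking that the variance-dependent generation matrix $\{A(\sigma^2)\}_{\sigma^2>0}$, read as a temperature-dependent matrix $\{A(T)\}_{T>0}$ through the identification $T=-\log^{-1}(\sigma^2/\sigma_{max}^2)$, satisfies the three hypotheses of that proposition. Under this identification the acceptance probability (\ref{AP}) coincides exactly with $P_{SA}$ in (\ref{SA}), so the classical part of Algorithm \ref{Schema} is a genuine SA process with transition matrix $m_{ij}(T)=a_{ij}(T)P_{SA}(x_i|x_j;T)$, and Proposition \ref{prop1} already guarantees the existence of the limit distribution $\pi^*=\lim_{T\to 0}\pi_T$. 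It then remains only to show that $\pi^*$ is supported exactly on $\arg\min f$.

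First I would dispose of hypotheses~1 and~3 directly from Lemma \ref{compl}. Since $G_{A(\sigma^2)}$ is complete for every $\sigma^2>0$, every off-diagonal entry $a_{ij}(T)$ is strictly positive, so combinatorial symmetry ($a_{ij}>0\Leftrightarrow a_{ji}>0$) holds trivially, and the neighborhood graph is the same complete graph for all $T$, which is hypothesis~3.

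The crux is hypothesis~2, the existence of a uniform lower bound $\delta>0$ on the positive off-diagonal entries. Here I would exploit the probabilistic alternative of line~21: with probability $1-q$ the algorithm skips the adiabatic evolution and measures the prepared state $\ket{\Phi_0}$ directly, so by the Born rule the outcome $x_j$ occurs with probability $|b_{x_j}|^2$, which by (\ref{phi0}) is strictly positive and, crucially, independent of $w$, of $\sigma^2$, and of the current state $x_i$. Since the full generation probability is a sum over the two branches, this yields
\[
a_{ij}(T)\;\geq\;(1-q)\min_{x\in X}|b_x|^2\;=:\;\delta>0,
\]
uniformly in $i\neq j$ and in $T$, using that $X$ is finite and that all $b_x\neq 0$. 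This is the step I expect to be the real content of the argument: it is exactly where the quantum step (cf. the proof of Lemma \ref{compl}) and the direct-measurement branch are needed, and it explains why both ingredients were built into the scheme. One should keep in mind that, without the $1-q$ branch, as $\sigma^2\to 0$ the sampling distribution $\phi_{w^*,\sigma^2}$ degenerates and a uniform-in-$T$ bound would be unavailable, so isolating the $\sigma^2$-independent contribution is what makes the estimate robust.

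With the three hypotheses verified, Proposition \ref{SAconvergence} applies and gives $\pi^*(x)>0$ if and only if $x\in\arg\min f$. Combining this with the convergence of the SA process to $\pi^*$ established in Proposition \ref{prop1}, the distribution produced by Algorithm \ref{Schema} concentrates on the global minima of $f$ as $\sigma^2\to 0$, which is the claim.
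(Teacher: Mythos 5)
Your proof is correct and follows essentially the same route as the paper: hypotheses 1 and 3 of Proposition \ref{SAconvergence} via the completeness given by Lemma \ref{compl}, and hypothesis 2 via the direct-measurement branch of line 21, yielding exactly the paper's bound $\delta=(1-q)\min_{x\in X}|b_x|^2$. Your added remark on why the $1-q$ branch is indispensable (the degeneration of $\phi_{w^*,\sigma^2}$ as $\sigma^2\to 0$) is a nice clarification, but the argument itself matches the paper's proof.
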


\begin{proof}
Let us check that the $\sigma^2$-dependent generating probabilities of the SA-like process implemented by Algorithm \ref{Schema} satisfy the hypotheses of Proposition \ref{SAconvergence}. As provided by Lemma \ref{compl} the neighborhood  graph associated to candidate solutions generation is complete for any $\sigma^2$ then Hypotheses 1 and 3 are satisfied. In order to verify also Hypothesis 2, let us observe that the probability $p$ of generating the candidate $x'$ given any current solution $x^*\not = x'$ satisfies:
\beq
p\geq (1-q) \min_{x\in X} |b_x|^2,
\eeq
where $1-q$ is the probability to perform a measurement process immediately after the preparation of $\ket{\Phi_0}$ (line 21) and $b_x$ is the coefficient of the state $\ket x$ in the coherent superposition $\ket{\Phi_0}$. Therefore Hypothesis 2 of Proposition \ref{SAconvergence} is satisfied for $\delta=(1-q) \min_{x\in X} |b_x|^2$.
\end{proof}

Proposition \ref{conv} states that the AQCLS algorithm asymptotically returns a global minimum of the objective function $f$. Moreover the output provides a problem Hamiltonian $H_P$ whose ground state is $\ket{x^*}$ and an adiabatic evolution time $\tau$ to produce that result. Therefore one can run several times the algorithm with the same data in order to find more efficient $H_P$ and $\tau$.


\section{Conclusions}

In this work we have proposed a hybrid quantum-classical algorithm to solve an optimization problem implementing a heuristic search where an adiabatic quantum machine is repeatedly called to generate candidate solutions. The scheme is equipped with a tabu-inspired mechanism which guides the search towards a better encoding of the problem into a problem Hamiltonian. 

AQALS considers AQC and general optimization problems whereas QALS \cite{ED} deals with quantum annealing and QUBO problems. In addition to the broader class of objective functions, another remarkable difference between AQCLS and QALS is that in AQCLS the evolution time of the adiabatic machine increases during the search and its final value is an output, instead in QALS the annealing time is implicitly assumed to be constant. Note that AQCLS is not a generalization of QALS but has to be considered an analogous scheme for a different quantum architecture.
    
 We have proven the convergence of the AQCLS algorithm, so it returns a global optimum, a problem Hamiltonian, and a suitable evolution time that give the information for obtaining the solution as an output of a proper adiabatic algorithm. Therefore the algorithm can be run several times (with the same $H_I$ and $s(t,\tau)$ or with different ones) in order to learn adiabatic algorithms $(H_I, s(t,\tau), H_P)$ by examples to solve a certain class of optimization  problems.
Roughly speaking AQCLS is a hybrid algorithm which can be applied to learn adiabatic quantum algorithms for solving optimization problems.

\section*{Acknowledgements}
 
The present work is supported by:

\vspace{-0.4cm}

\begin{center}
 \includegraphics[width=4cm]{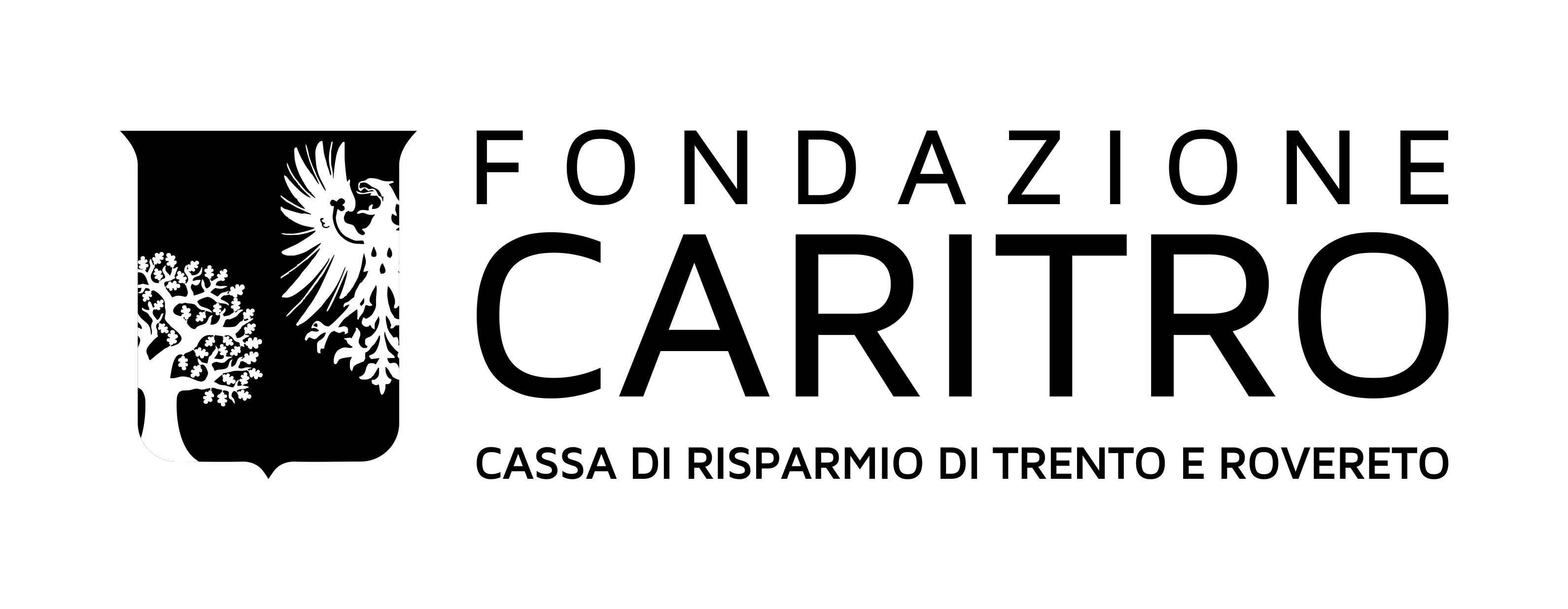}
\end{center}

\end{document}